\title{\LARGE \bf
Data-Driven Reachable Set Computation using Adaptive Gaussian Process Classification and Monte Carlo Methods  %
}
\author{ \parbox{3 in}{\centering Alex Devonport\\
         Electrical Engineering and Computer Sciences\\
         University of California, Berkeley\\
         {\tt\small alex\_devonport@berkeley.edu}}
         \parbox{3 in}{ \centering Murat Arcak\\
         Electrical Engineering and Computer Sciences\\
         University of California, Berkeley\\
         {\tt\small arcak@berkeley.edu}}
} 
\newcommand{\R}{\ensuremath{\mathbb{R}}}
\newcommand{\initialset}{\ensuremath{\mathcal{X}_0} }
\newcommand{\finalset}{\ensuremath{\mathcal{X}_1} }
\newcommand{\eventset}{\ensuremath{\mathcal{X}_e} }
\newcommand{\inputset}{\ensuremath{\mathcal{U}} }
\newcommand{\appendixref}[1]{#1}
\newtheorem{theorem}{Theorem}
\begin{document}
\renewcommand*{\bibfont}{\small}

\maketitle
\thispagestyle{empty}
\pagestyle{empty}

\begin{abstract}

We present two data-driven methods for 
estimating reachable sets
with probabilistic guarantees.
Both methods make use of a probabilistic formulation allowing for a formal definition of a data-driven reachable set approximation that is correct in a probabilistic sense.
The first method recasts the reachability problem as a binary classification problem, using a Gaussian process classifier to represent the reachable set.
The quantified uncertainty of the Gaussian process model allows for an adaptive approach to the selection of new sample points.
The second method uses a Monte Carlo sampling approach to compute an interval-based approximation of the reachable set.
This method comes with a guarantee of probabilistic correctness, and an explicit bound on the number of sample points needed to achieve a desired accuracy and confidence.
Each method is illustrated with a numerical example.

\end{abstract}

\section{Introduction}

Reachable sets characterize the states to which a system may evolve using the knowledge of where it starts, what inputs may affect the system, and how long the system may evolve.
Computing reachable sets is a critical step in the solution to control problems involving objectives such as safety, recurrence, and more complicated requirements expressed as automata or temporal logic specifications.
However, accurate reachable sets are generally very expensive to compute, 
and common practice is to use a tractable relaxation, such as an overapproximation that is guaranteed to contain the true reachable set.

In relaxing the problem, the analyst must make a trade-off between computational tractability and accuracy of the overapproximation.
There are many reachable set overapproximation methods that lie at different points of the tractability-accuracy spectrum.
At one extreme, reachability methods based on the Hamilton-Jacobi-Bellman equations 
\cite{MitchellTomlinHJProjections, TomlinHJGames}
and dynamic programming \cite{BertsekasMinimax}, such as those used in the Level Set Toolbox \cite{mitchell2005toolbox}, yield reachable set approximations that are very accurate but slow to compute.
Zonotope-based methods
\cite{althoff2008verification}, such as those used in the CORA toolbox\cite{althoff2015introduction}, are faster to compute at the cost of some accuracy.
At the opposite extreme, interval reachability methods
\cite{meyer2019tira, meyer2017hierarchical, moor2002abstraction}
give overapproximations that require a minimum of resources to compute and store, but due to their strict geometry they are generally conservative.

In this paper we introduce a \emph{data-driven} approach that allows for improvements in both tractability and accuracy, at the cost of a relaxed guarantee of correctness.
The essence of this relaxation is to place a suitable probability measure over the initial set and the controls, and to define reachable sets as events on the induced probability space. Then, a sample of simulated system trajectories can be used to make probabilistic estimates of the true reachable set.
To achieve the lowest computational complexity possible, we minimize the number of sample trajectories while maintaining a probabilistic guarantee of a given accuracy.

Probabilistic methods have been used to analyze the reachability of stochastic systems
\cite{SastryStochasticHybridSystems, althoff2008stochastic, margellos2014road, yang2016multi} and as an exploratory tool to guide deterministic reachability analysis \cite{LygerosATCMC}.
Here, we investigate the probabilistic approach as a rigorous method in its own right to analyze the reachability of deterministic systems. 
Data-driven methods have also been used as a tool for robustness analysis of uncertain control systems
\cite{tempo2012randomized}, which allow for probabilistic verification of robustness against various types of uncertainty.
This paper provides a similar approach to the problem of reachable set computation.

We present two data-driven methods for computing reachable set approximations that make use of the probabilistic relaxation.
The first method uses a Gaussian process classifier (GPC) to construct a probabilistic reachable set of arbitrary accuracy.
The prediction uncertainty of the GPC allows us to employ an \emph{active learning} method
\cite{settles2009active}, where we sequentially select samples in order to maximize information gain.
The second method uses a Monte Carlo sampling approach to construct an interval overapproximation of the probabilistic reachable set.
Although less accurate, this method comes with a provable probabilistic guarantee. The two methods are complementary: the GPC method allows for approximations of 
higher accuracy (since it is not restricted to interval approximations),
while the Monte Carlo method can make faster approximations.
When probabilistic guarantees are acceptable for the problem at hand, the formalism and methods described in this paper can offer a significant computational speedup.
An additional advantage of the data-driven approach is that it may be used in a model-free way: we need only to be able to sample system trajectories, so the system itself is allowed to be a black box or otherwise inaccessible.
Indeed, many high-fidelity models are either available only in black-box form, or are too complex to analyze with standard reachability tools.

\section{Reachable Sets}

Suppose we have a dynamical system with state transition function $\Phi(t; t_0, x_0, u)$ that maps an initial state $x_0\in\R^n$ at time $t_0$ to a unique final state at time $t_1$, under the influence of an input $u\in C^{[t_0,t_1]}$ and the system dynamics.
For example, if the system is defined as a vector ordinary differential equation
\begin{equation}
\dot{x}(t)=f(x(t),u(t),t) \label{eq:vector_ode}
\end{equation}
whose solutions are well-defined and unique on the interval $[t_0,t]$,
then $\Phi(t; t_0, x_0, u)$ is the solution to (\ref{eq:vector_ode}) satisfying the initial condition $\Phi(t_0; t_0, x_0, u)=x_0$.

Now, suppose we have an \emph{initial set} $\initialset\subset\R^n$, and an \emph{input set} $\inputset\subset C^{[t_0,t_1]}$.
We would like to know all of the states to which the system may evolve between times $[t_0, t_1]$ starting in the initial set, and subjected to any allowable input.
The set of all such states is the \emph{forward reachable set}, which we write as
\begin{equation}
R_{[t_0,t_1]}=\{x | x=\Phi(t_1;t_0, x_0, u)\text{ for some } x_0\in\initialset, u\in\inputset\}.
\end{equation}
When the state transition function is invertible, we also consider the inverse of this problem.
Suppose we have a \emph{final set} set $\finalset\subset\R^n$, and we would like to know all of the states that can reach \finalset in the time $[t_0,t_1]$.
The set of all such states is called the \emph{backward reachable set}, that is
\begin{equation}
B_{[t_0,t_1]}=\{x | \Phi(t_1;t_0, x, u)\in\finalset\text{ for some } u\in\inputset\}.
\end{equation}
We may also be interested in finding the set \eventset of all initial states for which some \emph{event}, characterized by $h(x, t, u)=0$, occurs at some time $t_e\ge t_0$.
The set of all such states is called the \emph{event set}, that is
\begin{equation}
E_{t_0}=\{x | h(\Phi(t_e;t_0, x, u), t_e, u(t_e))=0 \mbox{ for some }t_e \ge t_0, u\in\inputset\}.
\end{equation}
This is similar to the backwards reachable set problem, except that $t_e$ is not known \emph{a priori}.
Further, $t_e$ will in general not be the same for each state that leads to the event.

\section{Probabilistic Reachable Sets}

To frame the data-driven approach, we consider a \emph{probabilistic relaxation} of the reachable set problems described above.
The methods in this paper consist of sampling initial states and inputs, evaluating the transition function at these sample points, and using the results to estimate the reachable set.
The state transition function may be available directly through numerical integration of (\ref{eq:vector_ode}), through more advanced computer simulations, or even through physical experiments.

A reachable set computed using a sample-based method can be at best only \emph{probabilistically accurate}, so we would like a way to represent this notion as well.
To formalize the notion of sampling from \initialset, we define a random variable $X_0 \sim p_0$ over the initial set.
The probability distribution $p_0: \initialset\to [0,1]$ is called the \emph{initial distribution}, and may be any distribution whose support is \initialset.
Similarly, we will define a random variable $U\sim p_u$ over the input set, with \emph{input distribution} $p_u:\inputset\to[0,1]$.

These two random variables, together with the state transition function, define the family of \emph{successor random variables} $X_t=\Phi(t;t_0, X_0, U)\sim p_t$ for $t\ge t_0$.
In general, the distribution $p_t$ will be unknown, since the state transition function is not known.
The successor distribution can be used to define a probability space whose sample space is the state space $\R^n$, whose events are the Borel sets of $\R^n$, and whose probability measure is $p_t$.
In this probability space, the probability of an event $\omega$ corresponds to the probability that the successor of a random initial state and input is an element of $\omega$.
This means that the true forward reachable set $R_{[t_0,t_1]}$ corresponds to the smallest event of probability 1.
With that in mind, we define the \emph{$\epsilon$-accurate reachable sets}, denoted $R_{[t_0,t_1],\epsilon}$ as the smallest events with probability $1-\epsilon$.
A set $R\subset\R^n$ such that $p_t(R)\ge 1-\epsilon$ is an \emph{overapproximation} of an $\epsilon$-accurate reachable set, since it must contain an $\epsilon$-accurate reachable set.
The relationship between the deterministic and probabilistic cases for forward reachable sets for a two-state system is illustrated in Figure \ref{fig:reachability-cartoon}.

\begin{figure}
\begin{center}
\includegraphics[width=0.45\textwidth]{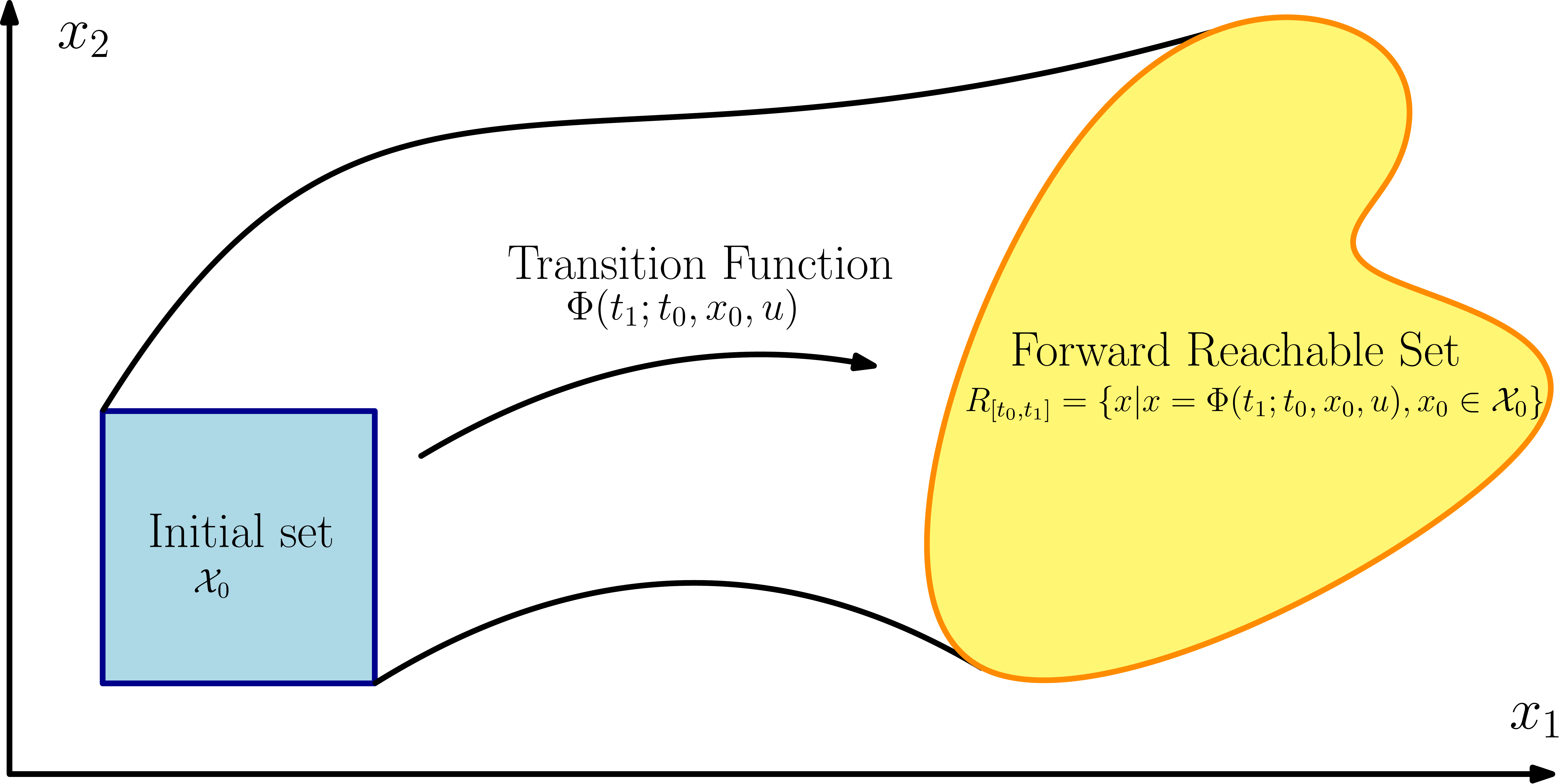} 
\includegraphics[width=0.45\textwidth]{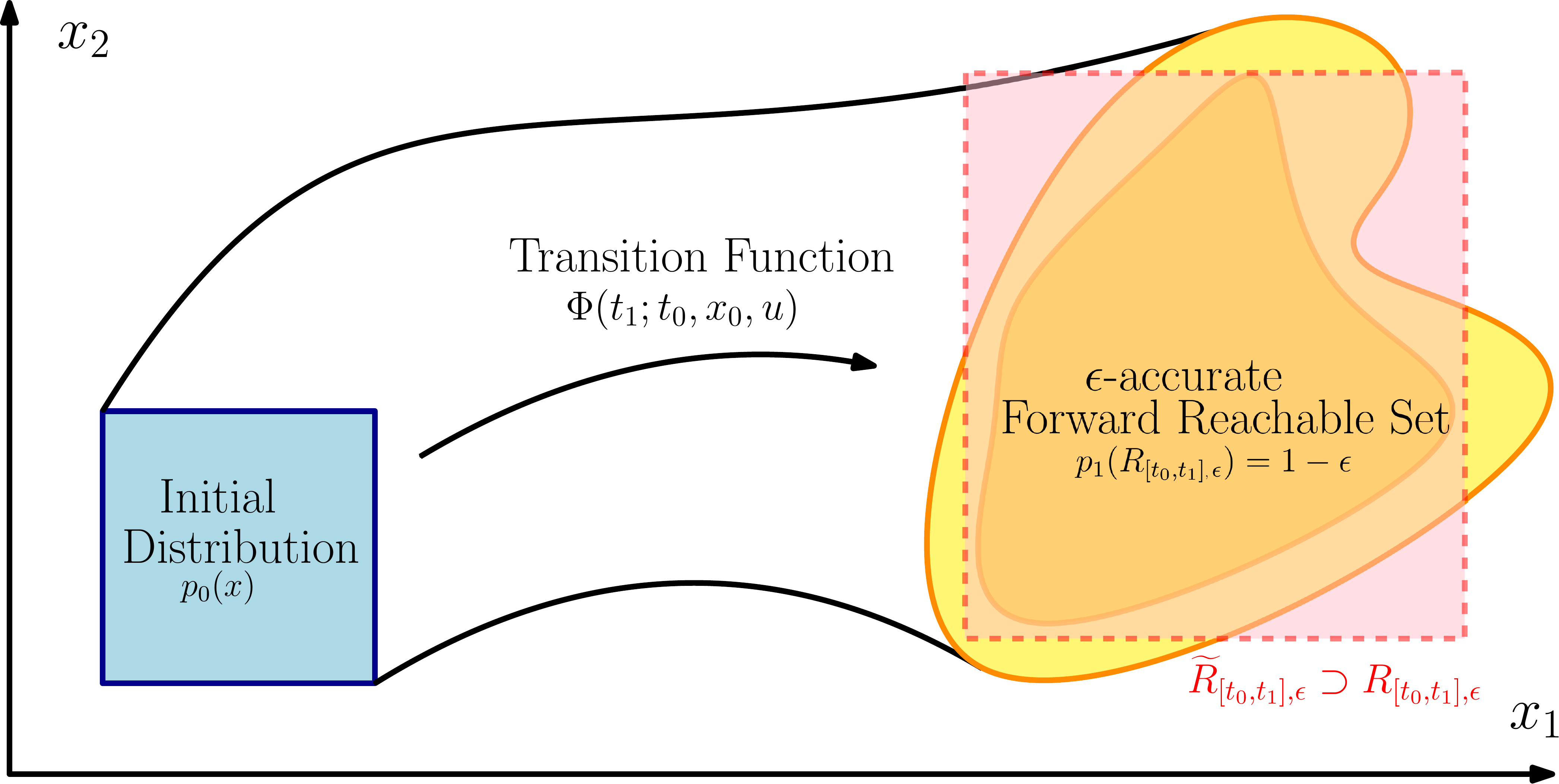}
\caption{A diagram of a forward reachable set (upper graph), and its relaxation to an $\epsilon$-accurate probabilistic forward reachable set (lower graph), and an overapproximation of the probabilistic reachable set. }
\label{fig:reachability-cartoon}
\end{center}
\end{figure}

We define a similar probabilistic formulation for backward reachable sets.
The only difference is that we will choose a final random variable $X_1$ and $U$, and let $X_0=\Phi^{-1}(t;t_0, X_1, U)$, where
\begin{equation}
\Phi^{-1}(t;t_0,x_1,u)=\{x | x_1=\Phi(t;t_0,x,u),u\in\inputset\}.
\end{equation}

For event sets, we are not interested specifically in the probabilistic behavior of $\Phi$, but instead in the likelihood that a given sample in an initial set will lead to the event.
Essentially, we would like to use samples to inform our belief about the location of the event set, so it is sensible to adopt a Bayesian formulation for the probabilistic event set.
We employ a distribution over the initial set, $p_0(x)$, which represents our belief that $x$ is in the event set prior to seeing any samples.
Then the posterior distribution conditioned on the sample trajectories represents an updated belief that the point $x$ belongs to the event set that takes information from the sample trajectories into account.
We call this posterior distribution the \emph{event distribution}, $p_e$.

\section{Gaussian Process Classification (GPC) With Adaptive Sampling}

The first method we present uses a Gaussian stochastic process to construct a binary classifier that identifies an estimate of the reachable set. 
A point in the state space is either in the reachable set or out of it, so determining the set of points in the reachable set has a natural representation as a binary classification problem.

A Gaussian process $g$ is a random variable defined over a space of functions with the property that the joint distribution of any finite selection of point evaluations of the function is distributed as a joint Gaussian random variable
\cite{gpml}.%
The covariance between any two point evaluations $g(x_1)$ and $g(x_2)$ is $k(x_1,x_2)$, where $k$ is the \emph{kernel function} of the process.

Suppose we have a set of $m$ sample points $x^{(i)}\in\R^n$ and their associated labels $y^{(i)}\in\{a,b\}$, where $a,b\in\R$, such that $y^{(i)}=a$ if $x^{(i)}$ is in the reachable set, and $y^{(i)}=b$ otherwise.
For example, a suitable choice of labels would be $a=1, b=0$.
We use a Gaussian process to construct a classifier that minimizes the \emph{regularized least-squares classification risk}, that is a function $g:\R^n\to\R$ that minimizes
\begin{equation}
\sum_{i=1}^{m} \left(g(x^{(i)})-y^{(i)}\right)^2 + ||g||_{k}
\end{equation}
where $||\cdot||_{k}$ is a norm that depends on the kernel function.
A classifier that minimizes this risk is called a \emph{least-squares classifier}. 
Least-squares classification is attractive here because the mean $\mu_{\hat{g}}$ and variance $\sigma_{\hat{g}}$ of the Gaussian process $\hat{g}$ that minimizes this risk have analytic expressions that can be computed quickly.

To make predictions using this classifier, we select a \emph{threshold} $\gamma\in (a,b)$, and declare that a point $x$ is predicted to be in the reachable set if $\hat{g}(x)\ge\gamma$, and not in the reachable set otherwise. For example, in the $a=1,b=0$ case, $\gamma=0.5$ is suitable.
With a threshold chosen, the reachable set estimate produced by this method is the sublevel set
\begin{equation}
\hat{R}=\{ x | \mu_{\hat{g}}(x) < \gamma\}.
\end{equation}
To construct a data set, we select a set of sample points $x^{(i)}$, and use the state transition function to assign a label $y^{(i)}$ to each of the sample points based on whether or not it is in the reachable set.

In principle, we may select the sample points in any way we like, e.g. uniform sampling over the region of interest, or using Latin hypercube sampling.
However, since we wish to minimize the number of transition function evaluations, we use the GPC model of the reachable set to inform our choice of future sample points. 
This kind of sampling is called \emph{adaptive sampling}, since our selection method adapts according to the incoming data, and is an \emph{active learning} method. 
The use of adaptive sampling to guide the construction of a Gaussian process model is motivated by a method from optimal experiment design known as \emph{Adaptive Kriging}
\cite{ak-mcs, schobi-pck}, in which a Gaussian process regression is used to form a \emph{surrogate model} for an expensive computational model.

To have the GPC inform our selection of sample points, we use the prediction uncertainty of the GPC to sequentially select sample points that maximize some measure of information gain.
The most relevant measure of information gain for this problem is the \emph{probability of misclassification}
\cite{bect2012sequential}, that is the probability that a state in the reachable set is identified as being outside of it, or vice versa.
With classifier threshold $\gamma$, the probability of misclassification is
\begin{equation}
P_{misclass}(x) = \Phi\left( -\frac{|\mu_{\hat{g}}(x) - \gamma|}{\sigma_{\hat{g}}(x)} \right)
\label{eq:p_misclass}
\end{equation}
where $\Phi$ in (\ref{eq:p_misclass}) is the cumulative distribution function of the standard normal distribution.

When selecting a new sample point, ideally we would like to find the point in the state space with the highest probability of misclassification.
However, this is a nonconvex and potentially high-dimensional optimization problem.
Instead of searching the entire state space for a new sample, we use a stochastic optimization approach proposed in
\cite{ak-mcs} and search over a large pool of randomly-selected \emph{candidate samples}. We calculate the probability of misclassification for each candidate, and select the one with the highest probability of misclassification to be the next sample.
The sample pool is selected using a Latin Hypercube, so that the candidate samples will be evenly distributed over a compact region of the state space.
Note that this is distinct from selecting samples directly by a Latin hypercube: after we have selected the candidate pool, only a small number of candidate points will be selected as sample points, and the distribution of the selected points will be guided by the probability of misclassification.

\subsection*{Example: Safe Set Estimation for Adaptive Cruise Control}

\begin{figure}
\includegraphics[width=\textwidth]{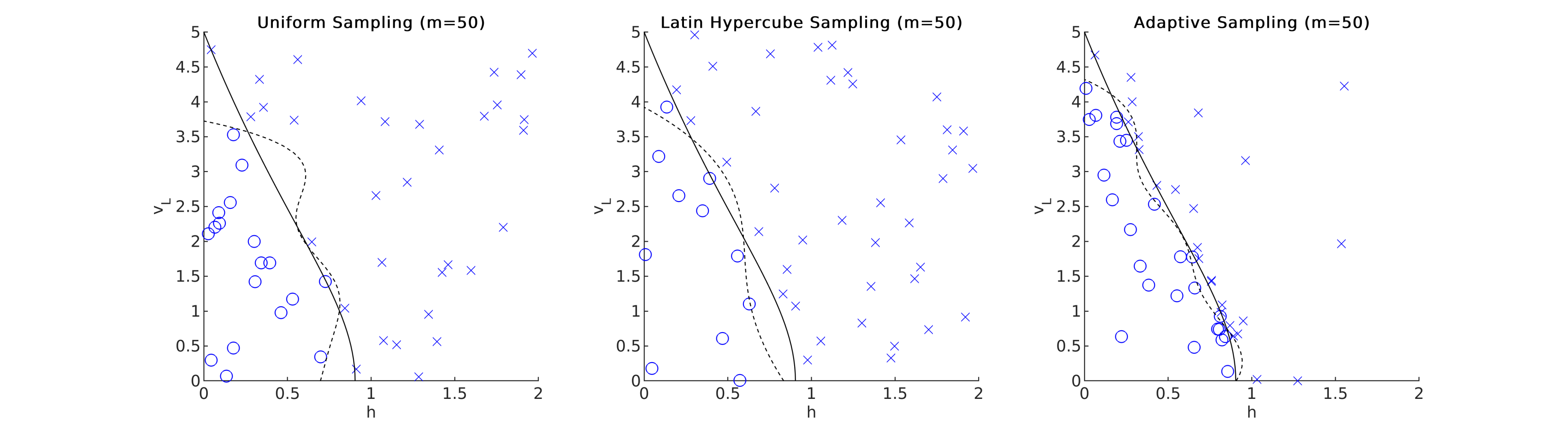}
\includegraphics[width=\textwidth]{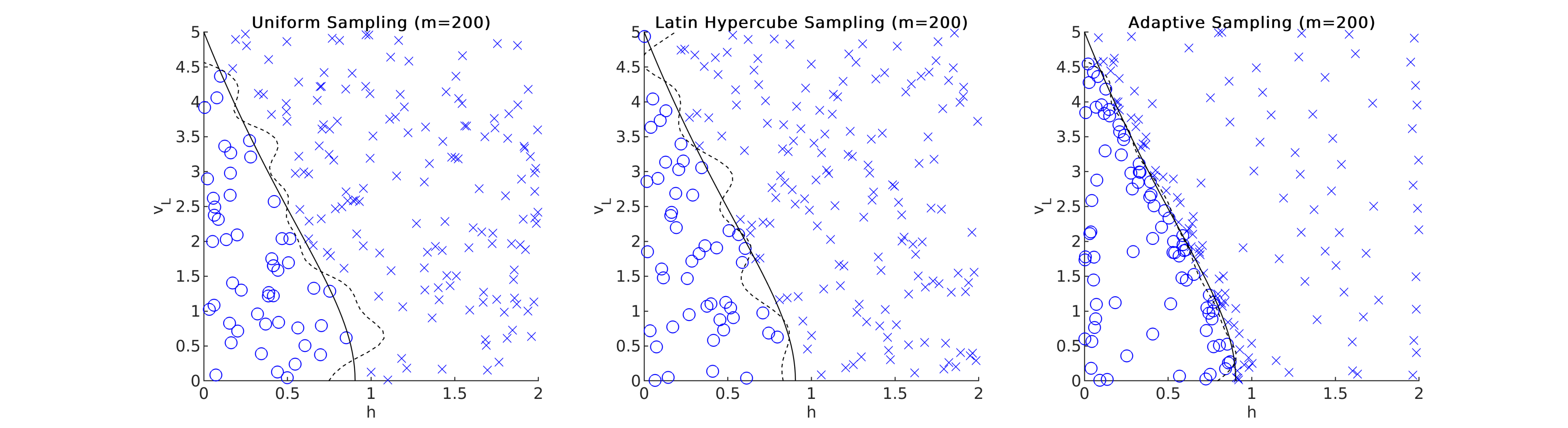}
\caption{Estimated safe set boundaries (dashed lines) for the ACC model computed with the GPC method compared with the true safe set boundary (solid lines), which is calculated analytically.
The sample locations are also shown: an `o' indicates that a collision occurred, and an `x' indicates that it did not.
The model parameters are set at $a=4.9$, $b=1$, and the initial follower velocity is fixed at $v_F(0)=5$.
Top row: $m=50$ sample points.
Bottom row: $m=200$ sample points.
For adaptive sampling, a candidate pool of $m_{candidate}=1000$ samples was used in both cases.
For the two sample sizes shown, adaptive sampling is able to make the most accurate approximation of the event set out of the three sampling methods used.}
\label{fig:safeset_gpc}
\end{figure}
Consider the Adaptive Cruise Control (ACC) scenario depicted in Figure \ref{fig:acc_diagram}.
In this scenario, a car being operated by ACC (the \emph{follower}) is driving behind another car (the \emph{leader}).
The follower and leader are initially traveling with positive velocities $v_F(0)$ and $v_L(0)$ respectively.
At $t=0$, the leader begins to brake and eventually comes to a halt.
If the distance between the leader and follower becomes zero at any $t>0$, then the two cars have collided.
To prevent this, we determine what initial states (that is, velocities and relative positions at $t=0$) give the follower enough time to prevent a collision.
We call the set of all such initial states a ``safe set''.

We use the following point-mass model for the dynamics of the two vehicles:
\begin{align}
\dot{h}(t) &= v_{L}(t)-v_{F}(t) \\
\dot{v}_{L}(t) &= -a - b v_{L}(t)^2\\
\dot{v}_{F}(t) &= -a - b v_{F}(t)^2 
\end{align}
where $v_L(t)$ and $v_F(t)$ are the velocities of the leader and follower, respectively, and $h(t)$ is the distance between the two cars.
The acceleration of each car has a constant term from the brakes, as both cars are applying the brakes fully, and a quadratic term from drag force.

\begin{figure}
\begin{center}
\includegraphics[width=0.45\textwidth]{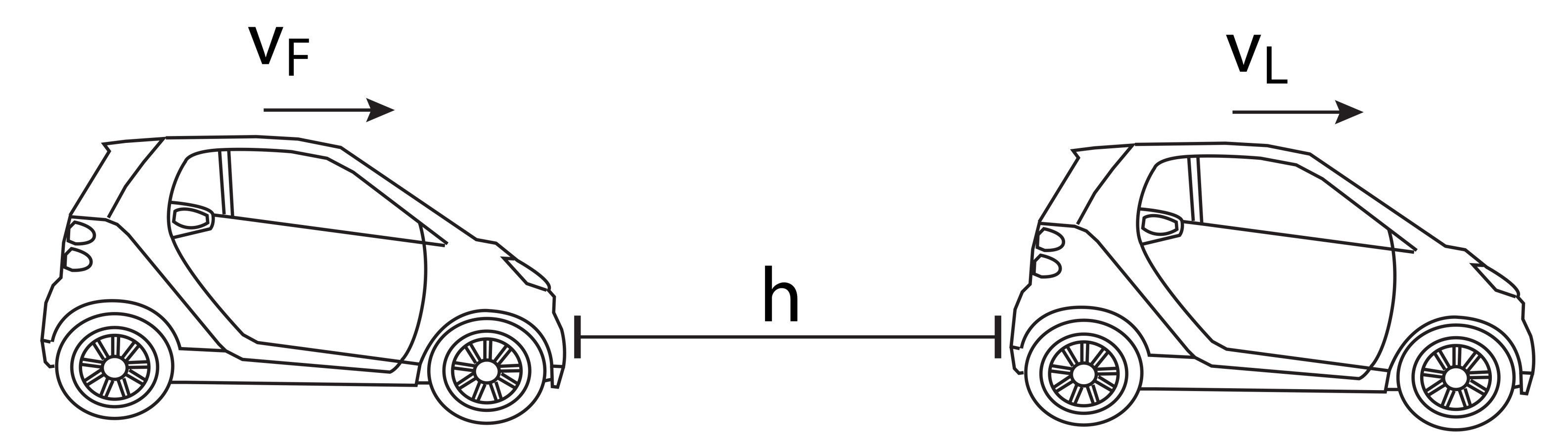}
\caption{Diagram of the leader and follower, and the associated state variables, in the ACC braking model. If $h(t_e)=0$ for some $t_e\ge t_0$, the two cars have collided.}
\label{fig:acc_diagram}
\end{center}
\end{figure}

This problem is an event set estimation problem because the safe set we wish to determine corresponds to the complement of the set of initial conditions $x(0)= \begin{bmatrix} h(0) & v_{L}(0) &  v_{F}(0) \end{bmatrix}^T$ for which the event $h(x)=h=0$ occurs.

\appendixref{
This model can be solved analytically, and the true event set is shown in the Appendix to be 

\begin{equation}
 E = \{ (h,v_L,v_F)| 
   h 
   + \frac{1}{2b}\log\left(1 + \frac{b}{a}v_L^2 \right)
   - \frac{1}{2b}\log\left(1 + \frac{b}{a}v_F^2 \right)
   \ge 0
\}.
\end{equation}
Since we know the true event set, we can directly observe how well the GPC method approximates the true event set under different conditions. 
}

For convenience of visualization, we hold the initial velocity of the follower constant at $v_{F}(0)=0.5$.
We restrict our attention to a compact region of the state space, specifically
\begin{align}
 0 \le h &\le 2\\
 0 \le v_{L} &\le 5.
\end{align}

Using sample points from this region, we construct a least-squares GPC using a \emph{squared-exponential} kernel, that is we take
\begin{equation}
 k(x_1,x_2)=\sigma \exp(-(x_2-x_1)^T \Lambda (x_2-x_1)),
\end{equation}
where $\sigma$ and the diagonal matrix $\Lambda=\text{diag}(\ell_1,\ell_2,\ell_3)$ are \emph{hyperparameters} that are selected using maximum likelihood.

To demonstrate the effectiveness of the adaptive sampling method, we compare it to two other non-adaptive sampling strategies: sampling uniformly at random over the region, and sampling with a Latin hypercube over the region.
To demonstrate how the number samples affects the quality of the predicted event set, we form two sample sets for each of the sampling methods, with $m=50$ and $m=200$ each.

For adaptive sample selection, we begin by selecting a pool of $m_{candidate}=1000$ candidate samples from the region of interest using a Latin hypercube.
Three samples are selected at random to serve as the initial set for the GPC model, and the remaining $m-3$ are selected by sequentially minimizing the probability of misclassification.

The GPC-estimated event sets are shown in Figure \ref{fig:safeset_gpc}.
The true reachable set is also shown, to confirm that the estimated reachable sets are converging to the ground truth.
For both sample sizes, the adaptive sampling method makes the most accurate event set estimate out of each of the three methods.
By maximizing the probability of misclassification with each new sample, the adaptive method will either select a new sample with high prediction variance, which will be far away from the other samples, or one whose prediction mean is close to the threshold; that is, one close to the border.
After minimizing the variance over most of the state space region of interest, the adaptive method begins to select samples that are likely to be near the border of the event set.
Using this strategy, the adaptive GPC reduces the number of samples required to identify points near the boundary of the event set.
By contrast, the uniform and Latin hypercube methods can only select states at random, so the likelihood of selecting a sample point near the boundary of the event set never increases.

\appendixref{
We used a point-mass model here so that the true event set could be derived, as has been done in the Appendix. However, this same analysis could be carried out with a high-fidelity model with no change to the methodology.
}

\section{Monte Carlo Interval Overapproximation}

We now present a Monte Carlo Sampling (MCS) approach to produce \emph{interval} overapproximations of epsilon-accurate reachable sets, that is overapproximations of the form
\begin{equation}
\hat{R} = [\underline{x}, \overline{x}]= \{x | \underline{x} \le x \le \overline{x}, \underline{x}\in\R^n,\overline{x}\in\R^n\}
\end{equation}
where $\le$ is the vector inequality corresponding to the positive orthant cone of $\R^n$.
Geometrically, the set $[\underline{x}, \overline{x}]$ is an axis-aligned hyperrectangle of dimension $n$ whose least point is $\underline{x}$ and whose greatest point is $\overline{x}$.

In general, even the tightest interval overapproximation of the reachable set will be inaccurate, since intervals have such restricted geometry.
Despite the inaccurate nature of their approximation of the true reachable set, interval approximations can be an appropriate design choice when quick computation and low memory requirements are preferable to reachable set accuracy.

An important example of when interval approximation is a suitable design choice is \emph{symbolic control}, where controller synthesis is carried out on a finite-state machine \emph{abstraction} that simulates the continuous-state dynamical system
\cite{lunze1994qualitative,
alur2000discrete,
moor2002abstraction,
tabuada,
gazit11,
belta, 
7519063}%
. The states of the abstraction represent the cells of a partition of $\R^n$, and the transitions are derived from the intersection of the forward reachable sets of each cell with the other cells. 
For high-dimensional state spaces, the number of reachable sets that must be computed and stored grows rapidly, so it is necessary to use a reachable set overapproximation that can be computed quickly and is memory-efficient.

A simple method to calculate the interval approximation is a Monte Carlo approach.
For the forward reachable set case, this would consist of the following steps:
\begin{enumerate}
\item take a set of $m$ samples each from the initial distribution and input distribution, $\{x_0^{(i)}\}_{i=1}^{m}$ and $\{u^{(i)}\}_{i=1}^{m}$;
\item Evaluate the sample successor states $x_1^{(i)}=\Phi(t;t_0,x_0^i,u^i)$;
\item Take $\hat{R}^{(m)}$ as the smallest interval containing all of the $x_1^{(i)}$.
\end{enumerate}

Despite its simplicity, the Monte Carlo Sampling (MCS) method described above is provably effective at overapproximating $\epsilon$-accurate reachable sets with intervals. In particular, the inequality (\ref{eq:mc_sample_bound}), adapted from an example in \cite{Vidyasagar} serves as a lower bound on the number of sample points required to ensure that the method described above produces an overapproximation of a desired accuracy and confidence.
Although similar in form, the sample complexity bound (\ref{eq:mc_sample_bound}) is distinct from the Chernoff bounds commonly used in the analysis of Monte Carlo methods, in ways that will be discussed after the proof.

\begin{theorem}
Let $\epsilon$, $\delta\in (0,1)$.
If 
\begin{equation}
m \ge \frac{2n}{\epsilon}\log\left(\frac{2n}{\delta}\right),
\label{eq:mc_sample_bound}
\end{equation}
then $\hat{R}^{(m)}$ overapproximates an $\epsilon$-accurate reachable set with confidence $\delta$, i.e.
$P(R_{[t_0,t_1],\epsilon} \subset \hat{R}^{(m)}) \ge 1-\delta$.
\end{theorem}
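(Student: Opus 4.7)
The plan is to decompose the complement of $\hat{R}^{(m)}$ into $2n$ axis-aligned tail regions and bound the $p_t$-mass of each via a standard order-statistic argument. Once $p_t(\hat{R}^{(m)}) \ge 1-\epsilon$, the overapproximation criterion stated earlier in the paper gives $R_{[t_0,t_1],\epsilon} \subset \hat{R}^{(m)}$, so establishing that inequality with confidence $1-\delta$ is sufficient.

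Concretely, I would write $\R^n \setminus \hat{R}^{(m)} = \bigcup_{i=1}^{n}\left(H_i^+ \cup H_i^-\right)$, where $H_i^+ = \{x : x_i > \overline{x}_i\}$ and $H_i^- = \{x : x_i < \underline{x}_i\}$, with (sample-dependent) measures $p_i^+$ and $p_i^-$. By subadditivity, whenever every $p_i^{\pm} \le \epsilon/(2n)$ we immediately obtain $p_t(\hat{R}^{(m)}) \ge 1-\epsilon$. Hence it suffices to control, over the randomness of the sample, the probability that any individual tail exceeds this threshold.

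For the upper tail in coordinate $i$, define the quantile $q_i^+ = \inf\{q : P(X_{t,i} > q) \le \epsilon/(2n)\}$, which gives $P(X_{t,i} < q_i^+) \le 1 - \epsilon/(2n)$. The event $p_i^+ > \epsilon/(2n)$ forces $\overline{x}_i < q_i^+$, which in turn requires all $m$ i.i.d.\ samples to have $i$-th coordinate strictly less than $q_i^+$; by independence this occurs with probability at most $(1-\epsilon/(2n))^m$. A symmetric argument handles $p_i^-$. Combining these via a union bound over the $2n$ tails and using $1-x \le e^{-x}$ yields
\begin{equation}
P\bigl(R_{[t_0,t_1],\epsilon} \not\subset \hat{R}^{(m)}\bigr) \le 2n\bigl(1 - \tfrac{\epsilon}{2n}\bigr)^m \le 2n\exp\bigl(-\tfrac{m\epsilon}{2n}\bigr),
\end{equation}
and setting the right-hand side equal to $\delta$ and solving for $m$ reproduces the bound (\ref{eq:mc_sample_bound}).

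The main point requiring care is the quantile step, since $p_t$ is arbitrary and may have atoms: the infimum definition of $q_i^+$ is chosen precisely so that the strict inequality coming from the sample maximum matches up with a non-strict bound on the marginal CDF, avoiding any boundary subtlety. Beyond that, the proof is a clean pairing of order-statistic concentration with a union bound, and notably requires no regularity, parametric form, or even knowledge of $p_t$ — which is exactly the distribution-free feature that makes the result usable in the black-box setting motivating this paper.
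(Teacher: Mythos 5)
Your proposal is correct and follows essentially the same route as the paper's proof: decompose the complement of $\hat{R}^{(m)}$ into $2n$ axis-aligned half-space tails, bound the probability that any tail retains mass above $\epsilon/(2n)$ via the sample maximum (the paper's ``smallest half space $\mathcal{P}_i$ of mass $\ge \epsilon/(2n)$'' plays exactly the role of your quantile $q_i^{\pm}$), take a union bound, and finish with $\log(1-x)\le -x$. If anything, your explicit infimum-based quantile definition handles the possible-atom boundary case a bit more carefully than the paper's phrasing, but the argument is the same.
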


\begin{proof}
First, suppose that we have $m$ samples of the successor $\{x_1^{(i)}\}_{i=1}^m$, and that the resulting $\hat{R}^{(m)}$ overapproximates the $\epsilon$-accurate reachable set.
 
An $n$-dimensional interval has $2n$ ``faces'', which are all axis-aligned.
To each ``face'' $f_i$ of the interval we associate the half space $\mathcal{H}_i$ whose boundary hyperplane coincides with that face, and faces ``away from'' the interval.

Now, let $\mathcal{P}_i$ be a half space parallel to $\mathcal{H}_i$ such that $p_{t_1}(\mathcal{P}_i)\ge\frac{\epsilon}{2n}$. Furthermore, let $\mathcal{P}_i$ be the \emph{smallest} such half space, in the sense that any other half space  $\mathcal{P}$ parallel to $\mathcal{H}_i$ such that $p_{t_1}(\mathcal{P}_i)\ge\frac{\epsilon}{2n}$ contains $\mathcal{P}_i$. By the right-continuity of $p_1$, this half space is unique. In the case that $p_1$ is continuous, this is just the unique hyperplane parallel to $\mathcal{H}_i$ such that $p_{t_1}(\mathcal{P}_i)=\frac{\epsilon}{2n}$.

The probability that none of the $m$ samples is in the half space $\mathcal{P}_i$ is $(1-\frac{\epsilon}{2n})^m$.
This implies that the probability that at least one of the $\mathcal{P}_i$ contains no samples less than $2n(1-\frac{\epsilon}{2n})^m$, and that the probability that each $\mathcal{P}_i$ contains at least one sample is greater than $1-2n(1-\frac{\epsilon}{2n})^m$.

If one of the samples is in $\mathcal{P}_i$, then $\mathcal{H}_i\subset\mathcal{P}_i$, meaning that $p_{t_1}(\mathcal{H}_i) \le p_{t_1}(\mathcal{P}_i)\le\frac{\epsilon}{2n}$. If this is true for each $\mathcal{P}_i$, then $p_{t_1}(\bigcup_i \mathcal{H}_i)=p_{t_1}(\hat{R}^{(n)\complement})\le\epsilon$, and so $p_{t_1}(\hat{R}^{(i)})\ge 1-\epsilon$. This implies that
\begin{equation}
P(p_{t_1}(\hat{R}^{(i)})\ge 1-\epsilon) \ge 1-2n(1-\frac{\epsilon}{2n})^m.
\end{equation}
To ensure that $p_{t_1}(\hat{R}^{(i)})\ge 1-\epsilon)$, i.e. that  $\hat{R}^{(m)}$ overapproximates contains an $\epsilon$-accurate reachable set, with probability $\ge 1-\delta$, it suffices to ensure
\begin{equation}
\delta \ge 2n\left(1-\frac{\epsilon}{2n}\right)^m.
\label{eq:mc_implicit_bound}
\end{equation}
Using the identity $\log(1-x) \le -x$, valid for $0\le x \le 1$, we can conclude that (\ref{eq:mc_sample_bound}) guarantees (\ref{eq:mc_implicit_bound}).
\end{proof}
For standard Monte Carlo methods, such as Monte Carlo Integration, probability inequalities such as \emph{Chernoff bounds} can be used to determine the number of samples needed for a desired accuracy. 
However, Chernoff bounds are not applicable in this case. 
Chernoff bounds apply to the tail probability of a sum of iid random variables (e.g. a Monte Carlo integral) exceeding an arbitrary parameter, 
and are derived by applying Markov's inequality to the moment generating function of the random sum.
In our case, however, we are interested in bounding the tail probability of the \emph{maximum} of an iid sequence of random variables exceeding a specific observed value (the empirical maximum). The strategy for deriving a Chernoff bound does not work in this case, which is why a different strategy was needed to prove Theorem 1.

\subsection*{Example: Robustness Analysis For a Powered Lower Limb Orthosis Through Forward Reachable Sets}

\begin{figure}
\begin{center}
\includegraphics[width=0.3\textwidth]{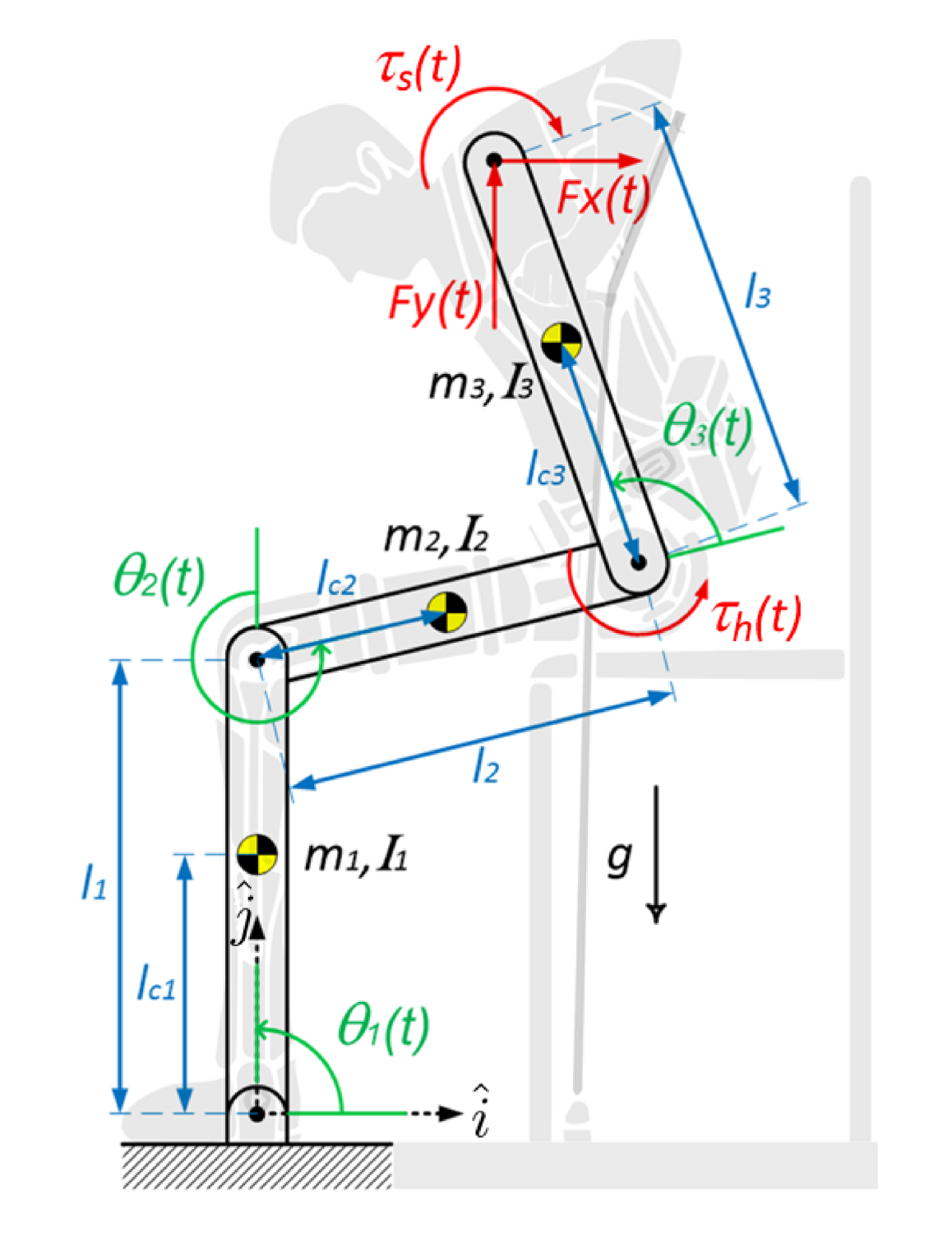}
\caption{Diagram of the orthosis and its user, from \cite{na}. State variables are labeled in red and green. Parameters are labeled in blue and black.}
\label{fig:ortho_diagram}
\end{center}
\end{figure}

We now present an application of the MCS method to a reachable-set problem posed in \cite{na} for robustness analysis of a sit-to-stand motion controller for powered lower limb orthoses.

The Powered Lower Limb Orthosis and its user are modeled in \cite{na} as a three-link planar robot with three joints.
The model has six states (angles and angular velocities of the three joints, labeled in red and green in Figure \ref{fig:ortho_diagram}) and twelve parameters (the lengths, masses, and moments of inertia of the three links, and the distances between each joint and the center of moment of its corresponding link, labeled in blue and black in Figure \ref{fig:ortho_diagram}).
The twelve parameters are \emph{uncertain}, since they all depend on the weight of the user.
The sit-to-stand reference trajectory is defined in terms of the position in the $x-$ and $y-$ (horizontal and vertical in Figure \ref{fig:ortho_diagram}) directions of the center of mass (CoM) of the user and orthosis, which can be computed from the state variables and parameters.

The authors of \cite{na} designed a finite time horizon LQR controller to track a reference trajectory that brings the user and orthosis from a sitting position to a standing position.
To analyze the robustness of the motion to parameter variations, the authors used sensitivity-based reachability methods to compute an interval overapproximation of the CoM trajectories subject under parameter changes induced by a 5\% variation in user body weight.
In this example, we approach this robustness analysis problem by using the MCS method
to compute an interval overapproximation of the CoM trajectory with accuracy $\epsilon=0.05$ and confidence $\delta=0.001$.

To recast the problem from a robustness analysis problem into a reachability problem, we add a new state variable $p_i$ for each parameter with constant dynamics, i.e. $\dot{p_i}=0$.
The parameters may be selected by choosing the initial conditions of the $p_i$.
This way, we can perform the robustness verification by computing reachable set overapproximations of the parameter-augmented 18-state system.

The parameter intervals are derived from human biometric data, and the initial state is assumed to be fixed. We choose an initial distribution to be the uniform distribution over the initial set. In this case, the initial distribution has a physical significance, namely that the parameters are subject to change from user to user. If we had more detailed statistical information about the parameters (e.g. that they are Gaussian-distributed with some mean and variance), that could be used to define the initial set instead.

Using (\ref{eq:mc_sample_bound}) with values $\epsilon=0.05$, $\delta=0.001$, and $n=18$, we know that $m=7554$ sample trajectories will suffice to compute an $\hat{R}^{(m)}$ that has at the desired levels of accuracy and confidence.

\begin{figure}
 \begin{center}
  \includegraphics[width=0.5\textwidth]{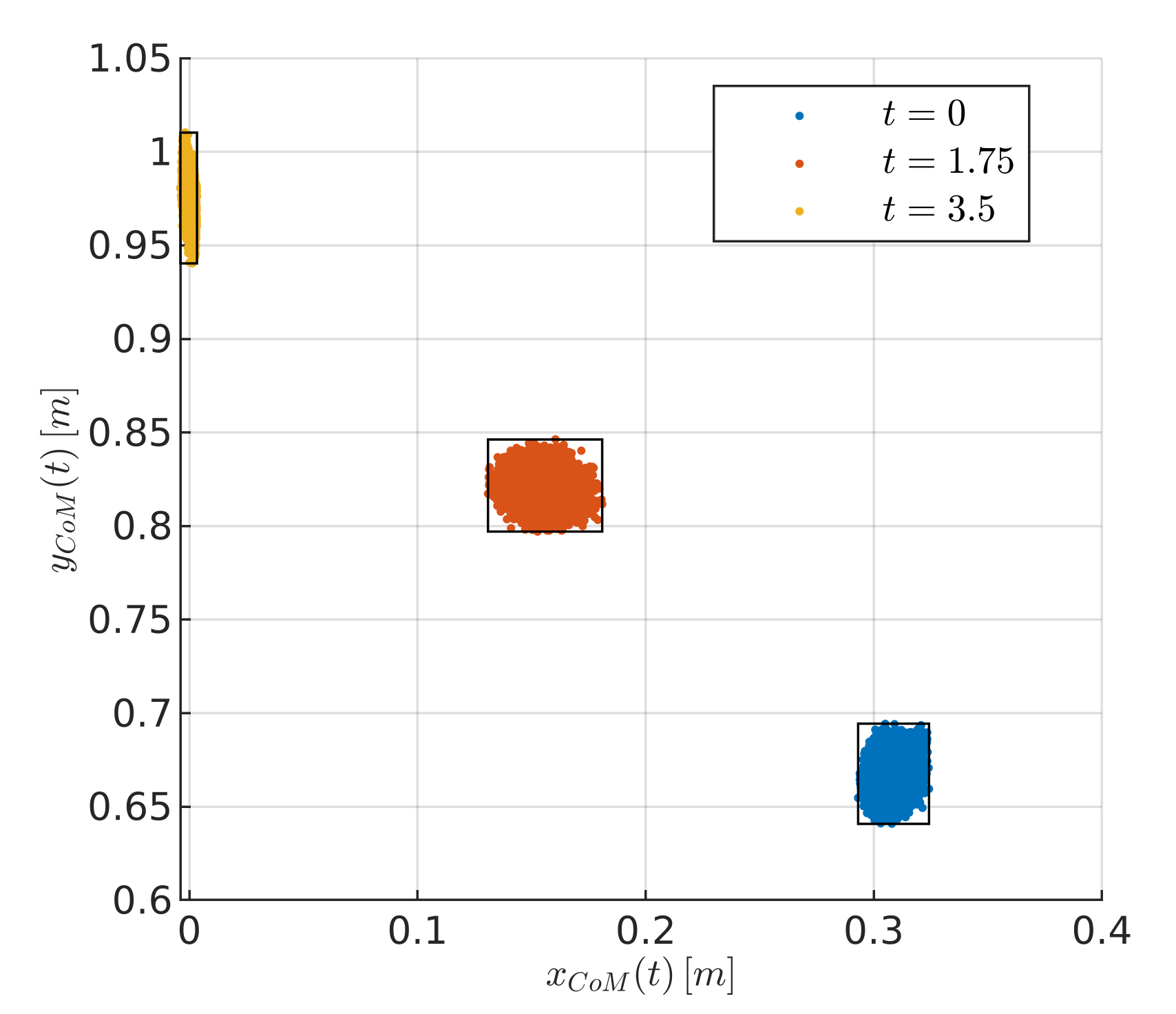}
  \caption{Reachable set overapproximations (black boxes) of the center of mass trajectories at three time instants over sit-to-stand movement, calculated using the Monte Carlo interval overapproximation method with $\epsilon=0.05$, $\delta=0.001$.
$m=7554$ samples trajectories (also shown) were required to ensure the specified accuracy and confidence over the parameter-augmented 18-dimensional state space.}
\label{fig:ortho_result}
 \end{center}
\end{figure}

The resulting interval overapproximations of the position of the CoM and is shown in Figure \ref{fig:ortho_result}. Specifically, we show the overapproximation for three points in the sit-to-stand movement; at the beginning ($t=0$), in the middle ($t=1.75$), and at the end ($t=3.5$) of the movement. In addition, the volumes of the intervals in Figure \ref{fig:ortho_result} are reported in Table \ref{tab:ortho_volumes}, and compared to the volumes of the intervals computed using sensitivity-based methods in \cite{na}. The table also reports the volumes of the interval overapproximations to the velocity of the CoM, which come from the same calculation as the position overapproximations. At each time, the volumes of the MCS rectangles is lower than those of the sensitivity method volumes, meaning that the MCS method gives less conservative estimates of the reachable sets.

\begin{table}
 \begin{center}
\begin{tabular}{lll} \toprule
interval & Volume from \cite{na}  & Volume from MCS \\
 \midrule
 $(x_{CoM}, y_{CoM})$, $t=0$ & $2.4\times 10^{-3}$ & $1.7\times 10^{-3}$\\
 $(x_{CoM}, y_{CoM})$, $t=1.75$ & $4.0\times 10^{-3}$ & $2.5\times 10^{-3}$\\
 $(x_{CoM}, y_{CoM})$, $t=3.5$ & $3.2\times 10^{-3}$ & $5.0\times 10^{-4}$\\
 $(\dot{x}_{CoM}, \dot{y}_{CoM})$, $t=0$ & $0$ & $0$\\
 $(\dot{x}_{CoM}, \dot{y}_{CoM})$, $t=1.75$ & $1.0\times 10^{-3}$ &$7.4\times 10^{-4}$\\ 
 $(\dot{x}_{CoM}, \dot{y}_{CoM})$, $t=3.5$ & $4.8\times 10^{-6}$ &$1.1\times 10^{-6}$\\ 
 \bottomrule
 \end{tabular}
 \end{center}
\caption{Comparison of Interval Overpproximation Volumes}
\label{tab:ortho_volumes}
\end{table}

\section{Conclusion}
In this paper, we showed the utility of a probabilistic formulation of the problem of reachable set approximation by developing data-driven reachable set computation methods, and by verifying them with numerical examples. Between the two methods, the analyst can make a reachable set approximation to any degree of accuracy. Sets computed with the MCS method also come with a rigorous probabilistic guarantee of accuracy and confidence. Since the problem formulation makes few assumptions, these two methods can be used on a wide range of systems, many of which are too complex for standard reachability methods. 

Both of the methods we present are useful in their present form, as shown by the examples. However, they can both be improved and extended with future work.
For the MCS interval overapproximation method,
the proof of (\ref{eq:mc_sample_bound}) makes no assumptions on the successor distribution $p_1$, other than that $p_1(x)$ be well-defined for all $x\in\initialset$.
On one hand, having a distribution-free result is useful, since we don't in general have any knowledge of $p_1$.
On the other hand, it suggests that if we did have some knowledge about $p_1$, the bound could perhaps be improved by using importance sampling or a related technique.
Additionally, the proof as given depends on the geometry of hyperrectangles, and does not generalize to other classes of sets that we may want to use for interval reachable set overapproximation.
For the GPC method, an useful extension would be a probabilistic guarantee, of a similar kind to the guarantee that the MCS method enjoys. It may also be useful to consider other classification methods than least-squares, which may be more accurate and amenable to analysis.

\section*{Acknowledgments}
\begin{flushleft}
This work was supported in part by the grants ONR N00014-18-1-2209, AFOSR FA9550-18-1-0253, NSF ECCS-1906164.
\end{flushleft}

\printbibliography

\section*{Appendix: Derivation of the True ACC Safe Set}

For the system
\begin{eqnarray}
\dot{x}&=&v, \\
\quad \dot{v}&=&-bv^2-a, \label{vdot}
\end{eqnarray}
the distance traveled before coming to a full stop is (from the derivation below):
\begin{equation}\label{main}
\frac{1}{2b}\ln\left(1+\frac{b}{a}v(0)^2\right).
\end{equation}

Since $h$ is a monotonic function of $t$ (see below), to check for a collision it suffices to check the sign of $h$ after both cars have come to a stop. if $h\ge0$ when both cars have stopped, then $h(t)\ge0$ for all prior $t$, and no collision occurred. On the other hand, if $h<0$ after the cars stopped, at some point $h$ changed sign, at which time there was a collision. With this in mind,
from (\ref{main}) the safe set in the $(h(0),v_L(0))$ space is:
\begin{multline}
E(v_F(0))=\bigg\{(h,v): h+\frac{1}{2b}\ln\left(1+\frac{b}{a}v^2\right)\\
-\frac{1}{2b}\ln\left(1+\frac{b}{a}v_F(0)^2\right)\ge 0\bigg\}.
\end{multline}

Derivation of (\ref{main}):
the solution of (\ref{vdot}) is
\begin{equation}\label{v}
v(t)=\sqrt{\frac{a}{b}}\tan\left( \tan^{-1}\left(\sqrt{\frac{b}{a}}v(0) \right)-\sqrt{ab}\,t\right)
\end{equation}
and integration gives
\begin{equation}\label{x}
x(t)-x(0)=\frac{1}{b}\ln \left( \frac{\cos\left(\tan^{-1}\left( \sqrt{\frac{b}{a}}v(0)\right)-\sqrt{ab}\,t\right)}{\cos\left(\tan^{-1}\left( \sqrt{\frac{b}{a}}v(0)\right)\right)}\right).
\end{equation}
Note  that $v(t)=0$ when $t$ is such that the argument of the tangent term in (\ref{v}) is zero.  Since the cosine in the numerator of (\ref{x}) has the same argument, the numerator equals one when $v(t)=0$.  When, in addition, the denominator is simplified as
$$
\cos\left(\tan^{-1}\left( \sqrt{\frac{b}{a}}v(0)\right)\right)=\left( 1+\frac{b}{a}v(0)^2\right)^{-1/2},
$$
the right-hand side of (\ref{x}) becomes (\ref{main}).

Monotonicity of $h(t)$:
introducing some abbreviations, we can write (\ref{x}) as
\begin{equation}
 x_i(t)-x_i(0) = \frac{1}{b} \ln \left(\frac{\cos(\alpha_i-\beta t)}{\cos(\alpha_i)}\right)
 \label{x simplified}
\end{equation}
where $i=L\text{ or }F$.
This This expression gives us $\dot{h}$ as
\begin{equation}
 \dot{h}=\dot{x}_L-\dot{x}_F=\frac{\beta}{b} \left(\tan(\alpha_L-\beta t)-\tan(\alpha_F-\beta t) \right).
\end{equation}
Now we can consider two cases: $\alpha_L>\alpha_F$, and $\alpha_L\le\alpha_F$. In the first case, we will have $\alpha_L-\beta t > \alpha_F-\beta t$, and $\tan(\alpha_L-\beta t) > \tan(\alpha_F-\beta t)$. This makes $\dot{h}(t)>0$ for all $t$, meaning that $h(t)$ is monotonically increasing. By analogous reasoning, in the second case $h(t)$ is monotonically decreasing.

\end{document}